\newcounter{theorem}
\newtheorem{lemma}[theorem]{Lemma}
\newtheorem{claim}[theorem]{Claim}
\newcounter{definition}
\newtheorem{definition}{Definition}
\newcommand{\beq}{\begin{equation}}
\newcommand{\eeq}{\end{equation}}
\newcommand{\bea}{\begin{array}}
\newcommand{\ena}{\end{array}}
\newcommand{\bdf}{\begin{definition}}
\newcommand{\blm}{\begin{lemma}}
\newcommand{\edf}{\end{definition}}
\newcommand{\elm}{\end{lemma}}
\newcommand{\bthm}{\begin{theorem}}
\newcommand{\ethm}{\end{theorem}}
\newcommand{\bprp}{\begin{prop}}
\newcommand{\eprp}{\end{prop}}
\newcommand{\bcl}{\begin{claim}}
\newcommand{\ecl}{\end{claim}}
\newcommand{\bcr}{\begin{coro}}
\newcommand{\ecr}{\end{coro}}
\newcommand{\bquest}{\begin{question}}
\newcommand{\equest}{\end{question}}
\newcommand{\larrow}{{\larrow}}
\newcommand{\nin}{{\not \in}}
\newenvironment{proof}{\textit{Proof:}}{\hfill$\square$\\}
\newtheorem{observation}{Observation}
\newtheorem{theorem}{Theorem}
\begin{document}
\title{ Expected time complexity of the auction algorithm and the push relabel algorithm for maximal bipartite matching on random graphs }
\author{Oshri Naparstek and Amir Leshem}
\date{\today}
\maketitle
\begin{abstract}
In this paper we analyze the expected time complexity of the auction algorithm for the matching problem on random bipartite graphs. We prove that the expected time complexity of the auction algorithm for bipartite matching is $O\left(\frac{N\log^2(N)}{\log\left(Np\right)}\right)$ on sequential machines. This is equivalent to other augmenting path algorithms such as the HK algorithm. Furthermore, we show that the algorithm can be implemented on parallel machines with $O(\log(N))$ processors and shared memory with an expected time complexity of $O(N\log(N))$.

\end{abstract}

\section {Introduction}
 One of the most extensively studied problems in combinatorial optimization in the last 50 years known as the Bipartite Maximum Cardinality Matching (BMCM). The main goal of the BMCM problem is to find an assignment where the maximal number of vertices are matched on bipartite graphs, thus making it a special case of the max-flow problem as well as the max-sum assignment problem. Hence, any algorithm that solves one of these problems also solves the BMCM problem. Most of the algorithms that solve the BMCM problem are augmenting paths algorithms.The Hopcroft-Karp (HK) algorithm \cite{hopcroft1973n} is one of the most well-studied of these. The HK algorithm has a worst case time complexity of $O\left(\sqrt{N}m\right)$ where $N$ is the number of vertices in the bipartie graph and $m$ is the number of edges in the graph. The HK algorithm was revisited more recently by Feder and Motwani \cite{feder1991clique} and was proven to have a worst case time complexity of $O\left(\frac{\sqrt{N}m}{\log(N)}\right)$. Another well known augmenting path algorithm that solves the BMCM problem is Dinic's algorithm \cite{dinic1970algorithm}. Motwani \cite{motwani1994average} proved that Dinic's algorithm achieves perfect matching on random bipartite graphs with an expected running time of $O\left(\frac{m\log(N)}{\log\Delta}\right)$ where $\Delta$ is the expected number of neighbours per vertex. It was later shown by Bast et al. \cite{bast2006matching} that algorithms that use the shortest augmenting paths, have an expected time complexity of $O\left(\frac{\sqrt{N}m}{\log(N)}\right)$. Recently it was shown by Frize et al. \cite{chebolu2010finding} that maximum cardinality matching can be found on sparse random graphs with an expected time complexity of $O\left(m\right)$ using a combination of the Karp-Sipser heuristic and an augmenting path. Goel et al. \cite{goel2010perfect} presented an algorithm with an expected time of $N\log(N)$ iterations on regular bipartite graphs.

 Other solutions to the BMCM problem that do not use the augmenting path method include the auction algorithm \cite{bertsekas1979distributed} and the push relabel (PR) algorithm \cite{goldberg1988new}. The main difference between the augmenting path algorithms described above and these algorithms is that in augmenting path algorithms, an augmenting path is first found and then augmented. By contrast, in the auction and PR algorithms, only two edges are augmented on each iteration according to some update rule. It was shown by both Goldberg \cite{goldberg1995efficient} and Bertsekas \cite{bertsekas1992forward} that the push relabel algorithm and the auction algorithm are equivalent. Since these algorithms are equivalent we will refer only to the auction algorithm from now on.
 The average time complexity of the auction algorithm is not known and the problem of calculating it has remained unresolved  more than $30$ years. However, in many cases it has been shown that the auction algorithm converges faster than other methods in solving the assignment problem \cite{bertsekas1990auction}. Furthermore, several experimental studies \cite{setubal1993new,setubal1996sequential,cherkassky1998augment,kaya2012push} showed that in practice, auction algorithms outperforms augmenting paths based algorithms on many real life scenarios for the solution of the BMCM problem. Bertsekas \cite{bertsekas1991reverse} conjectured that the average running time of the auction algorithm would be $O\left(m\log(N)\right)$ on bipartite graphs with uniformly distributed weights where m is the number of edges in the bipartite graph and $N$ is the number of vertices on each side of the graph.

In this paper we analyze the average time complexity of the auction algorithm for the BMCM problem.
We prove that the expected time complexity of the auction algorithm for random bipartite graphs where each edge is independently selected with probability $p\geq \frac{2\log(N)}{N}$ is $O\left(\frac{N\log^2(N)}{\log\left(Np\right)}\right)$  on sequential machines. We show that by reducing the density of the graph such that $p=O\left(\frac{c\log{N}}{N}\right)$ the complexity reduces to $O\left(\frac{N\log^2(N)}{\log\left(\log(N)\right)}\right)$. We then present a parallel implementation of the algorithm for parallel machines with $O\left(\log(N)\right)$ processors and a shared memory. We prove that the expected time complexity of the parallel implementation is $O(N\log(N))$.

\section {Notation and problem formulation}

\begin{definition}
Let $G=(V,E)$ be a graph with a vertex set $V$ and an edge set $E$. The neighbor set of vertex $v\in V$ is given by
\beq
n_v=\left\{u \in V: \{u,v\}\in E\right\}
\eeq
\end{definition}
\begin{definition}
Let $G=(U,V,E)$ be bipartite graph with vertex sets $|U|=|V|=N$ and an edge set $E$. Let $M\subseteq E$ and let $\tilde{G}(\tilde{U},\tilde{V},M)$ be a bipartite subgraph of $G$ with vertex sets $|\tilde{U}|,|\tilde{V}|=N$ and an edge set $M$. We say that $M$ is a matching on $G$ if
\beq
\max_{v\in U\cup V}|n_v|=1
\eeq
\end{definition}
\begin{definition} Let $G=(U,V,E)$ be bipartite graph with vertex sets $|U|=|V|=N$ and an edge set $E$. If $M$ is a matching and $|M|=N$ then we say $M$ is a perfect matching.
\end{definition}
\begin{definition}
Let $G=(U,V,E)$ be bipartite graph with vertex sets $|U|=|V|=N$ and an edge set $E$. Let $M\subseteq E$ and let $\tilde{G}(U,V,M)$ be a bipartite subgraph of $G$ with vertex sets $|U|,|V|=N$ and an edge set $M$.
We say that a vertex $v\in U \cup V $ is free if $|n_v|=0$ otherwise we say it is not free.
\end{definition}
\begin{definition}
 Let $G=(V,E)$ be a graph with vertex set $V$ and an edge set $E$. A \emph{path} $P$ of length $l$ is an ordered set with $l+1$ vertices $\{v_1,v_2,..,v_{l+1}\}$ such that $v_i\in V,\forall i=1..l+1$ and $(v_i,v_{i+1})\in E ,\forall i=1..l$.
\end{definition}
\begin{definition}
Let $G=(U,V,E)$ be bipartite graph with vertex sets $|U|=|V|=N$ and an edge set $E$. Let $M$ be a non-maximal matching on $G$.  An \emph{alternating path} $P$ on $G$ given partial matching $M$ is a path where $(v_i,v_{i+1})\in M$ if $i$ is even and $(v_i,v_{i+1})\notin M$ if $i$ is odd. An illustration of an alternating path is shown in Fig \ref{figure_alternating_path}.
\end{definition}
\begin{figure}[htbp]
\centering \includegraphics[width=0.3\textwidth]{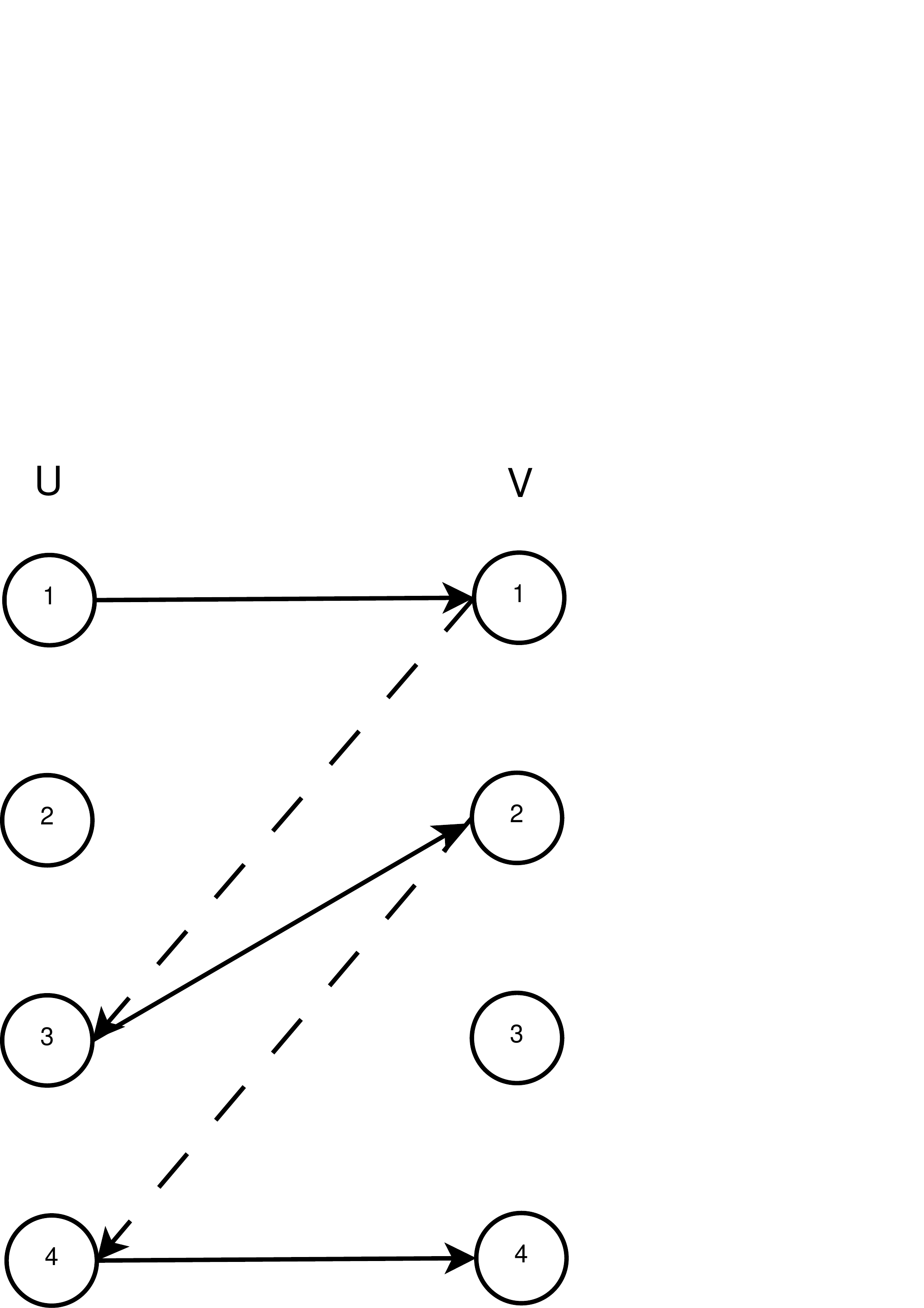}
\caption{Alternating path \ref{lemma_neighbor} }
\label{figure_alternating_path}
\end{figure}
\begin{definition}
Let $G=(U,V,E)$ be a bipartite graph with vertex sets $|U|=|V|=N$ and an edge set $E$. Let $M$ be a non-maximal matching. An \emph{augmenting path} of $M$ is an alternating path on $G$ that starts at a free vertex in $U$ and ends at a free vertex in $V$.
\end{definition}
The BMCM problem is defined as follows:
 Let $G=(U,V,E)$ be a bipartite graph with vertex set $U\cup V$ and an edge set $E$. Find an assignment $M$ with maximal number of edges.

\section{The auction algorithm}
\label{section_auction_alg}
The auction algorithm \cite{bertsekas1979distributed} is an intuitive method for solving the assignment problem. In many cases it has been shown to converge faster than other methods for this case \cite{bertsekas1990auction}. Auctions in which unassigned people raise their prices and bid for objects simultaneously was the original inspiration for the auction algorithm. Similarly, the auction algorithm has two stages, the bidding stage and the assignment stage. In the bidding stage each unassigned individual raises the price of the object he wishes to acquire by the difference between the most profitable object and the second most profitable object plus some constant $\epsilon$. In the assignment stage every object is assigned to the highest bidder. The two stages are repeated until all bidders are assigned an object. More specifically, let $\textbf{R}$ be the matrix of the initial rewards. Let $\textbf{B}$ be a matrix of the bids. $\boldsymbol\rho$ is the price vector where $\rho_k$ is given by:
\beq\label{eq_rho}
\rho_k=\max_{n}\textbf{B}(n,k)
\eeq
Let $\boldsymbol\eta=\left[\eta_1,\eta_2,...,\eta_N\right]$ be an assignment (permutation) vector  where $\eta_n$ is the object that is assigned to the $n$'th person; i.e., the matching $\{(U_n, V_{\eta_n}):n=1..N\}$ is a perfect matching.
\beq
(U_n,V_{\eta_n})\in M
\eeq
\begin{definition} An object $k$ is said to be \emph{assigned} to person $n$ by $\boldsymbol\eta$ if $\boldsymbol\eta_n=k$.
\end{definition}
\begin{table}
\caption{The Auction algorithm (Bertsekas 79)}
\label{table_auction_alg}
\begin{tabular}{l}
\hline
Select $\epsilon>0$, set all the people as unassigned and set $\rho_k=0,n=1..N$\\
\textbf{Repeat until all the people are assigned}\\
\ \ 1. Choose an unassigned person $n$\\
\ \ 2. Calculate his maximum profit $\gamma_{n}=\max_k(\textbf{R}(n,k)-\rho_k)$ \\
\ \ 3. Calculate the second maximum profit\\
\ \ \ \ \ $\tilde{k}=\arg\max_k(\textbf{R}(n,k)-\rho_k)$\\
\ \ \ \ \ $\omega_{n}=\max_{k\neq \tilde{k}}(\textbf{R}(n,k)-\rho_k)$\\
\ \ 4. Assign object $\tilde{k}$ to person $n$. If this object has been\\
\ \ \ \ \ assigned to another person, make this \\
\ \ \ \ \ person unassigned (and as a result unassigned).\\
\ \ 5. Set person $n$ as assigned\\
\ \ 6. Update the price of object $\tilde{k}$ to be \\
\ \ \ \ \ $\rho_{\tilde{k}}=\rho_{\tilde{k}}+\gamma_{n}-\omega_{n}+\epsilon$\\
\textbf{end} \\
\hline
\end{tabular}
\end{table}
The reward of the $n$'th person on assignment $\boldsymbol\eta$ is denoted by $\textbf{R}(n,\eta_n)$
and the price that the $n$'th person pays on assignment $\boldsymbol\eta$ is denoted by
$\rho_{\boldsymbol\eta_{n}}$.
Given a positive scalar $\epsilon$, an assignment $\eta$ and a price $\rho_{\eta_n}$, a person $n$ is termed \emph{happy} with assignment $\boldsymbol\eta$ if the profit (i.e., reward minus price) is within $\epsilon$ of the maximal profit achievable by person $n$. This condition is called $\epsilon$-Complementary slackness ($\epsilon$-CS).
\beq\label{eq_ecs}
\textbf{R}(n,\eta_n)-\rho_{\eta_n}\geq\max_k(\textbf{R}(n,k)-\rho_k)-\epsilon
\eeq
When all the people are assigned and happy, the algorithm stops. It was shown in \cite{bertsekas1979distributed} that the algorithm terminates in finite time. The algorithm is within $N\epsilon$ of being optimal at termination \cite{bertsekas1979distributed}. Also, if the initial prices are all zeros and $N\leq K$ the algorithm still converges in finite time and with the same bounds on optimality. The original Bertsekas auction algorithm is depicted in Table \ref{table_auction_alg}.
\section{A simplified auction algorithm for maximal bipartite matching}
The BMCM problem is a special case of the assignment problem where the values in the reward matrix are either $0$ or $1$. Furthermore, since the reward matrix is integer valued, the choice of $\epsilon= \frac{1}{N}$ is sufficient for obtaining an optimal solution \cite{bertsekas1979distributed}. The cardinality of a matching is the number of vertices that were assigned to edges with weight $1$.
Here we simplify the auction algorithm and assume that on each iteration an unassigned person is picked and raises the price of his most desirable object by exactly $\epsilon$.  Under this assumption, the price
\beq
\rho_k=\frac{h_k}{N}
\eeq
where $h_k$ is the number of times that the price of the $k$'th object was raised.
We further simplify the algorithm and assume that a person would only bid on objects with positive rewards.
Under this assumption the $\epsilon$-CS condition \ref{eq_ecs} for an object with a positive reward becomes:
\beq
1-\frac{h_{\eta_n}}{N}\geq\max_k(1-\frac{h_k}{N})-\frac{1}{N}.
\eeq
Multiplying both sides by $N$ and rearranging yields a simplified $\epsilon$-CS condition for the BMCM problem
\beq\label{eq_ecs_bipart}
h_{\eta_n}-1\leq\min_k(h_k).
\eeq
Using the simplified $\epsilon$-CS condition we can derive a simplified version of the auction algorithm for the BMCM problem.
On each iteration an unassigned person is chosen, the person is assigned to the $k$'th channel with minimal $h_k$ and raises it by $1$.
The description of the algorithm using graph theory terminology is as follows:
Let $G=(U,V,E)$ be a bipartite graph with vertex sets $U,V$ where $|U|=|V|=N$ and an edge set $|E|$. Let $\textbf{h}=[h_1,h_2,...,h_N]$ be the value assigned to the vertices in $V$. On each iteration a free vertex $u\in U$ is  assigned to a vertex $v \in V$ with minimal value $h_k$. If another vertex was previously assigned to that vertex it becomes free. The algorithm stops after a perfect matching is found or the number of iterations is larger than $N^2+N$.
The algorithm is depicted in Table \ref{table_alg}. Note that the simplified auction algorithm is equivalent to the push relabel algorithm with double push \cite{goldberg1995efficient}.
\begin{table}
\label{table_alg}
\caption{Simplified auction algorithm for maximal matching in bipartite graphs}
\begin{enumerate}
\item Initialize $h_v=0,\forall v\in V$ and set $M=\emptyset$
\item While $|M|<N$ and $\sum_{k=1}^Nh_k<N(N-1)$ do
\begin{enumerate}
\item Choose a free vertex $u\in U$
\item $j=\arg\min_{v\in n_{u}}h_v$
\item $M= M\cup (u,j)$
\item $u_{old}=\left\{u \in U:(u,j)\in M\right\}$
\item $M=M\setminus(u_{old},j)$
\item $h_j=h_j+1$
\end{enumerate}
\item Return
\end{enumerate}

\end{table}

\section{The average number of iterations of the simplified auction algorithm for the BMCM problem}
\label{section_alg}
In this section we analyze the expected time complexity of the simplified auction algorithm for the BMCM problem.
\begin{observation} \label{lamma_T_sum_H}Let $T$ be the number of iterations until the algorithm terminates and let $h_v$ be the value of vertex $v$ at termination, then
\beq
T=\sum_{v=1}^N h_v
\eeq
\end{observation}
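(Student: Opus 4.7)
The plan is to prove the observation by a direct induction on the number of iterations, tracking the invariant that the sum of the $h_v$ values equals the iteration count.

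First I would note that the algorithm, as described in Table \ref{table_alg}, modifies the vector $\vh=[h_1,\dots,h_N]$ in exactly one place per pass through the main loop, namely step 2(f): $h_j = h_j + 1$ for the single index $j = \arg\min_{v\in n_u} h_v$ chosen in step 2(b). All other steps only manipulate the matching $M$; none of them touches any entry of $\vh$. Therefore a single iteration of the loop increases $\sum_{v=1}^N h_v$ by exactly $1$.

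Next I would set up the induction. The base case is immediate from step 1: before any iteration is performed, $h_v = 0$ for all $v\in V$, so $\sum_{v=1}^N h_v = 0$, which equals the iteration count so far. For the inductive step, assume that after $t$ iterations the partial sum equals $t$; by the observation in the previous paragraph, one more iteration raises the sum by $1$, yielding $t+1$ after $t+1$ iterations. Taking $t = T$, the value of the loop counter at termination, gives the claimed identity $T = \sum_{v=1}^N h_v$.

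There is essentially no obstacle here: the statement is a bookkeeping identity that falls out of the fact that each iteration performs exactly one unit increment on the $\vh$ vector. The only thing worth being careful about is that $T$ is well-defined and finite, which follows from the explicit loop guard $\sum_{k=1}^N h_k < N(N-1)$ in step 2 together with the fact that this sum is strictly monotone in the iteration index, so the algorithm must halt after at most $N(N-1)$ iterations (and, trivially, after at most $N^2+N$ as stated in the text preceding the table).
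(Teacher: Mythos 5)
Your proof is correct and rests on exactly the same observation as the paper's one-line argument: each iteration increments exactly one entry of $\mathbf{h}$ by $1$, so the sum of the $h_v$ tracks the iteration count. The paper simply states this as trivial, while you spell out the induction; the content is identical.
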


\begin{proof}
The proof is trivial since on every iteration the value of exactly one vertex is increased by $1$.
\end{proof}
\begin{lemma}
\label{lemma_neighbor}Let $G=(U,V,E)$ be a bipartite graph with vertex sets $|U|=|V|=N$ and an edge set $E$. Let $M(i)\subseteq E$ be a non maximal matching obtained by the algorithm in the $i$'th iteration. Let $h_v(i)$ be the value of vertex $v$ in the $i$'th iteration of the algorithm.
Let $D_l(i)$ be a subset of $V$ defined by:
\beq
D_l(i)=\left\{v\in V:h_v(i)\geq l\right\}.
\eeq
If $v_0\in D_l(i)$ and $(u,v_0)\in M(i)$ then
\beq
n_u\subseteq D_{l-1}(i)
\eeq
\end{lemma}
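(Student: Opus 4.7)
The plan is to track the most recent iteration $i^{\star} \le i$ at which the edge $(u,v_0)$ was added to the matching, and then chain together two monotonicity facts: $h$-values never decrease, and the selection rule at step (b) of the algorithm guarantees $v_0$ was a minimizer over $n_u$ at iteration $i^{\star}$.

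First I would establish that $h_{v_0}(i) = h_{v_0}(i^{\star})$, i.e.\ $v_0$'s value has not changed since $u$ was matched to it. According to the algorithm, $h_{v_0}$ is incremented only at an iteration where some free vertex $u'$ chooses $v_0$ as its argmin, and in that iteration $v_0$'s previous partner (if any) is removed from $M$. Hence, between $i^{\star}$ and $i$, every increment of $h_{v_0}$ would have evicted $u$ from the matching, contradicting the hypothesis $(u,v_0) \in M(i)$. This gives $h_{v_0}(i^{\star}-1) = h_{v_0}(i) - 1 \geq l-1$.

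Next I would apply the argmin rule at iteration $i^{\star}$: since $u$ was free at iteration $i^{\star}$ and picked $v_0$, we have
\[
h_{v_0}(i^{\star}-1) \;=\; \min_{v \in n_u} h_v(i^{\star}-1),
\]
so every $v \in n_u$ satisfies $h_v(i^{\star}-1) \geq h_{v_0}(i^{\star}-1) \geq l-1$. Combining with the monotonicity $h_v(i) \geq h_v(i^{\star}-1)$ (which follows because the algorithm only increments $h$-values, never decrements them), we conclude $h_v(i) \geq l-1$ for every $v \in n_u$, i.e.\ $n_u \subseteq D_{l-1}(i)$.

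The only delicate step is the first one, where one must be careful about the order of operations inside a single iteration of the loop in Table \ref{table_alg} (the match is added to $M$ before $h_j$ is incremented, and $u_{\text{old}}$ is removed in between). Beyond this bookkeeping, everything follows from monotonicity of $h$ and the greedy choice in step (b); there are no probabilistic or combinatorial subtleties, which is consistent with this being a structural lemma preparing for the later expected-complexity arguments.
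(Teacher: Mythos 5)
Your proof is correct and follows essentially the same route as the paper's: identify the iteration at which $u$ last bid on $v_0$, use the argmin rule to bound all of $n_u$ from below by $h_{v_0}$ at that moment, and finish by monotonicity of the $h$-values. You are in fact slightly more careful than the paper, which silently assumes $h_{v_0}$ has not changed since that bid (writing $h_{v_0}(k+1)=l$), whereas you justify this by noting any later increment would have evicted $u$ from the matching.
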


\begin{proof}
If $v_0\in D_l(i)$ and $(u,v_0)\in M(i)$ it implies that $\displaystyle v_0=\arg\min_{v\in n_u} h_v(k)$ for some $k<i$ and
\beq
h_{v_0}(k+1)=h_{v_0}(k)+1=l.
\eeq
This implies that for every vertex $v\in n_u$ in the $(k+1)$'th iteration
\beq
h_{v}(k+1)\geq h_v(k+1)-1=h_v(k)=l-1.
\eeq
This implies that in the $k$'th iteration
\beq
n_u\subseteq D_{l-1}(k).
\eeq
Since $h_v(n)$ is a non-decreasing function of $n$ (prices never go down)
\beq
n_u\subseteq D_{l-1}(k)\subseteq D_{l-1}(i).
\eeq
\end{proof}
An illustration of lemma \ref{lemma_neighbor} is given in Figure \ref{figure_neghbor}.
\begin{figure}[htbp]
\centering \includegraphics[width=0.45\textwidth]{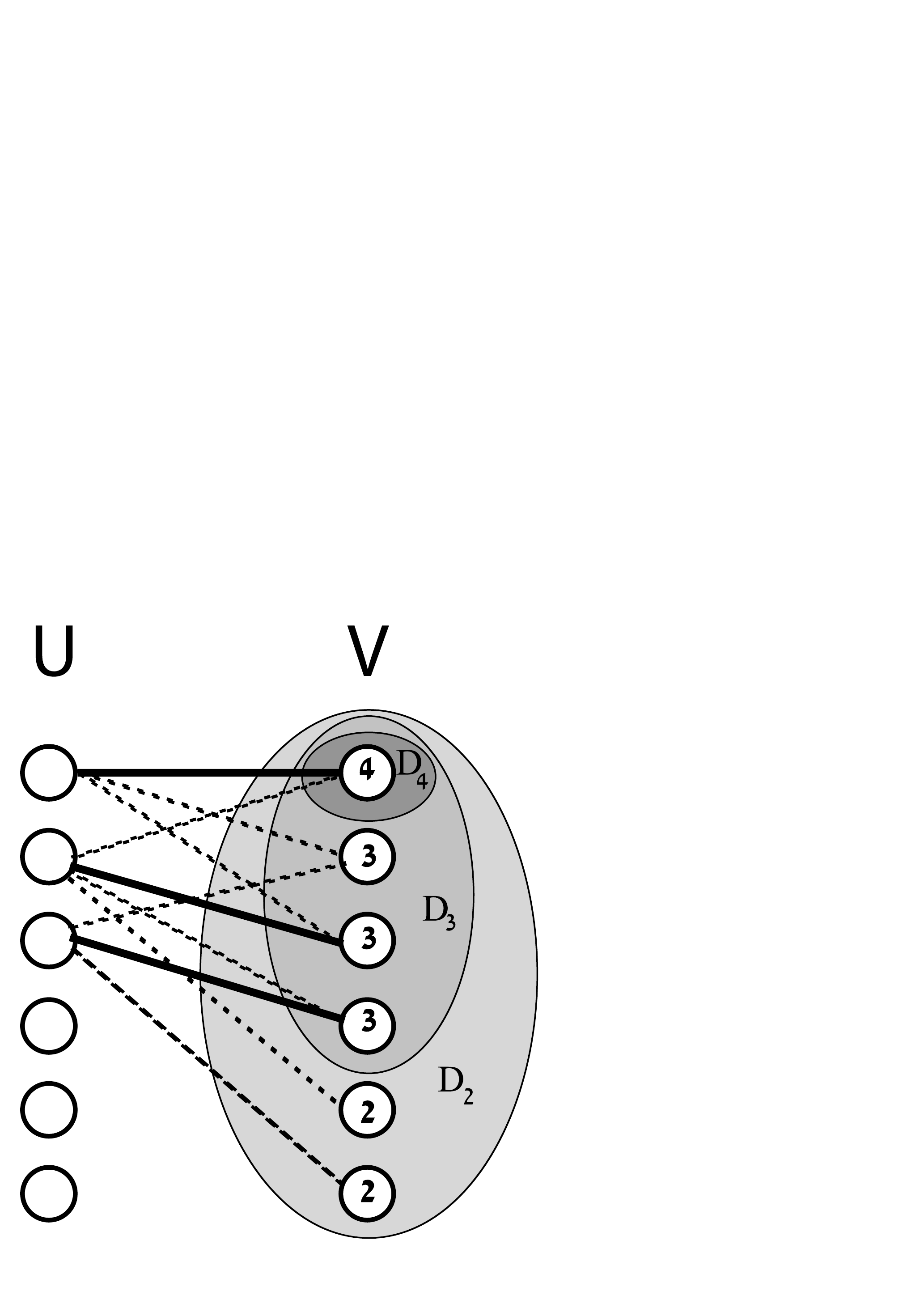}
\caption{An illustration of lemma \ref{lemma_neighbor} }
\label{figure_neghbor}
\end{figure}
%
%
\begin{claim}
\label{lemma_dl_1} Let $G=(U,V,E)$ be bipartite graph with vertex sets $|U|=|V|=N$ and an edge set $E$. Let $M(i)$ be a non maximal matching obtained by the algorithm in the $i$'th iteration. Let $u_0\in U$ be a free vertex such that in the $i$'th iteration of the algorithm $n_{u_0}\subseteq D_l(i)$. Let $u_1\in U$ be the end point of an alternating path $P$ with $|P|=2$ starting from $u_0$ then
\beq
\bea{ll}
v\not\in D_{l-2}(i)\setminus D_{l-1}(i)& \forall v\in n_{u_{1}}
\ena
\eeq
\end{claim}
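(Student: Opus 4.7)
The plan is to reduce the claim directly to Lemma \ref{lemma_neighbor}, applied to the matched edge that appears as the second edge of the length-$2$ alternating path. First I would unpack what $P$ looks like: write $P=\{u_0,v_0,u_1\}$ for some intermediate vertex $v_0\in V$. Because $u_0$ is free, the first edge $(u_0,v_0)$ is automatically unmatched (consistent with being odd-indexed in the definition of an alternating path), and the second edge $(v_0,u_1)$ must lie in $M(i)$ (it is even-indexed). Hence $(u_1,v_0)\in M(i)$.

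The next step is to locate $v_0$ in the hierarchy of the sets $D_l(i)$. Since $v_0\in n_{u_0}$ and by hypothesis $n_{u_0}\subseteq D_l(i)$, we have $v_0\in D_l(i)$. Now I would invoke Lemma \ref{lemma_neighbor} with the pair $(u_1,v_0)\in M(i)$ and the index $l$: its conclusion gives
\beq
n_{u_1}\subseteq D_{l-1}(i).
\eeq

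To finish, I would note the elementary set-theoretic fact that the family $\{D_k(i)\}$ is nested, $D_l(i)\subseteq D_{l-1}(i)\subseteq D_{l-2}(i)$, so
\beq
\bigl(D_{l-2}(i)\setminus D_{l-1}(i)\bigr)\cap D_{l-1}(i)=\emptyset.
\eeq
Combining this with $n_{u_1}\subseteq D_{l-1}(i)$ yields that for every $v\in n_{u_1}$, $v\notin D_{l-2}(i)\setminus D_{l-1}(i)$, which is the statement of the claim.

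There is no real obstacle here: the claim is essentially a one-line corollary of Lemma \ref{lemma_neighbor}. The only points that require care are (i) correctly identifying which of the two edges of $P$ is the matched one (forced by the alternating-path parity convention together with $u_0$ being free), and (ii) the observation that membership in $D_{l-1}(i)$ is precisely what rules out membership in the ``layer'' $D_{l-2}(i)\setminus D_{l-1}(i)$.
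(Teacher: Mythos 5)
Your proposal is correct and follows essentially the same route as the paper: identify the matched edge $(u_1,v_0)\in M(i)$ forced by the alternating-path parity, note $v_0\in n_{u_0}\subseteq D_l(i)$, and apply Lemma \ref{lemma_neighbor} to get $n_{u_1}\subseteq D_{l-1}(i)$, from which the claim follows. If anything, your final step (using the nesting $D_{l-1}(i)\cap\bigl(D_{l-2}(i)\setminus D_{l-1}(i)\bigr)=\emptyset$) is stated more cleanly than the paper's ``reapplying the lemma.''
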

\begin{proof}
We first observe that if $n_{u_0}\subseteq D_l(i)$, then by definition for every vertex $v\in n_{u_0}$
\beq
v\not\in D_{l-1}(i)\setminus D_{l}(i).
\eeq
By the definition of an alternating path, if $n_{u_0}\subseteq D_l(i)$ there exists a vertex $j\in n_{u_{1}} \cap n_{u_{0}}$ such that $(u_{1},j)\in M(i)$, $(u_{0},j)\nin M(i)$ and $j \in D_l(i) $.
Hence, by lemma \ref{lemma_neighbor} it implies that
\beq
n_{u_{1}}\subseteq D_{l-1}(i).
\eeq
Reapplying the lemma implies that for every vertex $v\in n_{u_{1}}$
\beq
v\not\in D_{l-2}(i)\setminus D_{l-1}(i).
\eeq
\end{proof}

\begin{lemma}
\label{lemma_max_h} Let $G=(U,V,E)$ be bipartite graph with vertex sets $|U|=|V|=N$ and an edge set $E$. Let $M(i)$ be a non maximal matching obtained by the algorithm in the $i$'th iteration and let $u_0\in U$ be a free vertex such that in the $i$'th iteration of the algorithm $n_{u_0}\subseteq D_l(i)$;  then every augmenting path of $G$ on $M(i)$ starting from $u_0$ is at least of length $2l+1$.
\end{lemma}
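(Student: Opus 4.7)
The plan is to induct along the augmenting path, using Claim~\ref{lemma_dl_1} to show that the $D_l$-membership bound on neighborhoods decays by exactly one as we travel two edges along an alternating path, and then to close the argument by observing that a free vertex in $V$ has $h$-value zero.

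More precisely, fix any augmenting path from $u_0$ and write it as $u_0, v_1, u_1, v_2, u_2, \ldots, u_k, v_{k+1}$, where $(u_0, v_1) \notin M(i)$, $(v_j, u_j) \in M(i)$, $(u_j, v_{j+1}) \notin M(i)$, and $v_{k+1}$ is free in $V$. I claim, by induction on $j$, that $n_{u_j} \subseteq D_{l-j}(i)$ for every $j = 0, 1, \ldots, k$. The base case $j=0$ is the hypothesis. For the inductive step, the sub-path $u_{j-1}, v_j, u_j$ is an alternating path of length $2$ from $u_{j-1}$ to $u_j$, so by the inductive hypothesis $n_{u_{j-1}} \subseteq D_{l-j+1}(i)$, and Claim~\ref{lemma_dl_1} (together with the intermediate conclusion $n_{u_1} \subseteq D_{l-1}(i)$ appearing inside its proof) yields $n_{u_j} \subseteq D_{l-j}(i)$.

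To finish, I use the fact that a free vertex $v \in V$ satisfies $h_v(i) = 0$: the algorithm only raises $h_v$ at the very step that $v$ first becomes matched, and subsequent iterations on $v$ can only swap its mate rather than unmatch it, so $v$ remains matched forever after its first bid. Hence $v_{k+1} \notin D_1(i)$. On the other hand, $v_{k+1} \in n_{u_k} \subseteq D_{l-k}(i)$. If $k < l$ we would get $v_{k+1} \in D_{l-k}(i) \subseteq D_1(i)$, a contradiction. Therefore $k \geq l$, and the path has length $2k + 1 \geq 2l + 1$, as claimed.

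The only mildly delicate step is the justification that a free $V$-vertex has $h$-value zero; this is a small property of the specific update rule in the simplified algorithm rather than of the graph, but once stated the rest is a clean two-edge induction built on Claim~\ref{lemma_dl_1}. No new technology beyond Lemma~\ref{lemma_neighbor} and Claim~\ref{lemma_dl_1} is needed.
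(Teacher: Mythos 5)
Your proof is correct and follows essentially the same route as the paper's: you induct two edges at a time along the alternating path via Claim~\ref{lemma_dl_1} to get $n_{u_j}\subseteq D_{l-j}(i)$, and then close the argument with the observation that a free vertex of $V$ has $h$-value $0$ (i.e., lies outside $D_1(i)$), exactly as the paper does. If anything, your version is slightly more careful, since you correctly note that the literal conclusion of Claim~\ref{lemma_dl_1} is too weak to chain and that the invariant one must carry is the containment $n_{u_1}\subseteq D_{l-1}(i)$ established inside its proof.
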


\begin{proof}
Let $u_m\in U$ be the end point of an alternating path $P$ with $|P|=2m$ starting from $u_0$.
By recursively applying claim \ref{lemma_dl_1} we get that for every vertex $v\in n_{u_{m}}$
\beq\label{eq_D0D1}
v\not\in D_{l-m-1}(i)\setminus D_{l-m}(i).
\eeq
The last vertex in an augmenting path is a free vertex.
Note that a vertex $v\in V$ is free only if
\beq
v\in D_{0}(i)\setminus D_{1}(i).
\eeq
However, from equation \ref{eq_D0D1} if $n_{u_0}\subseteq D_l(i)$ then for all $v\in n_{u_{m}}, m\leq l$,
\beq
\bea{ll}
v\not\in D_{0}(i)\setminus D_{1}(i).
\ena
\eeq
This means that if $P$ is an augmenting path that starts from $u_0$ there are at least $l$ odd elements on $P$ .
This implies that any augmenting path that starts from $u_0$ has at least $2l+1$ elements.
\end{proof}
A well-known theorem by Berge \cite{berge1957two} states that if a bipartite graph $G$ contains a perfect matching there exists an augmenting path in $G$ with respect to any non maximal matching $M$.
\begin{theorem}\cite{berge1957two}
 \label{theorem_augment}  Let $G=(U,V,E)$ be a bipartite graph with vertex sets $|U|=|V|=N$ and an edge set $E$. If $G$ contains a perfect matching there exists an augmenting path in $G$ for any partial matching $M$.
 \end{theorem}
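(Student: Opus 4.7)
The plan is to prove the statement via the classical symmetric difference argument. Since $G$ contains a perfect matching, fix one and call it $M^*$; since $M$ is a non-maximal (partial) matching, we have $|M| < N = |M^*|$. I would consider the edge set $H = M \triangle M^* = (M \setminus M^*) \cup (M^* \setminus M)$ viewed as a subgraph of $G$, and analyze its connected components.

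First, I would observe that every vertex $v$ is incident to at most one edge of $M$ and at most one edge of $M^*$, so in $H$ every vertex has degree at most $2$. Hence each connected component of $H$ is either an isolated vertex, a simple path, or a simple cycle, and along any such component the edges alternate between $M$ and $M^*$ (because consecutive edges at a vertex must come one from each, as two edges of the same matching cannot share a vertex). In any cycle component the alternation forces an even length, so the number of $M$-edges equals the number of $M^*$-edges in that component.

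Next, I would use a counting argument: the total number of $M^*$-edges in $H$ equals $|M^* \setminus M|$, which exceeds $|M \setminus M^*|$ because $|M^*| > |M|$. Since cycle components contribute equally to both sides, some path component $P$ must contain strictly more $M^*$-edges than $M$-edges. An alternating path can only have this property if both of its endpoints are incident to $M^*$-edges rather than $M$-edges, i.e., the first and last edges of $P$ belong to $M^* \setminus M$. Those two endpoint vertices are therefore not covered by any edge of $M$, so they are free with respect to $M$. Because $G$ is bipartite and the edges of $P$ alternate between the two sides, one endpoint lies in $U$ and the other in $V$.

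Putting the pieces together, $P$ is an alternating path with respect to $M$ (matching the definition given in the excerpt, since the non-$M$ edges come from $M^* \subseteq E$, so $P$ is a valid path in $G$), and it runs between a free vertex of $U$ and a free vertex of $V$. This is precisely an augmenting path for $M$ in $G$, completing the proof. The main conceptual obstacle is simply formalizing that alternation plus the strict inequality $|M^*| > |M|$ forces a path component with $M^*$-edges at both ends; the rest is bookkeeping, and no subtle graph-theoretic machinery beyond the degree-$\leq 2$ decomposition of $H$ is required.
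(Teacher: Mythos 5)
The paper offers no proof of this statement; it is imported verbatim as Berge's theorem with a citation to \cite{berge1957two}, so there is nothing internal to compare against. Your symmetric-difference argument is the standard proof of (this direction of) Berge's theorem and is correct: the degree-at-most-$2$ decomposition of $M \triangle M^*$, the parity argument for cycles, and the counting step forced by $|M^*|>|M|$ all go through. The only point you gloss over is why the endpoints of the surplus path component are actually free with respect to $M$: you need to rule out that such an endpoint $v$ is covered by an edge $e\in M$ lying outside the component --- if $e\in M\setminus M^*$ then $e$ belongs to $H$ and $v$ would have degree $2$ there, so it would not be an endpoint, and if $e\in M\cap M^*$ then $v$ would meet two edges of $M^*$ (namely $e$ and the first edge of the path), contradicting that $M^*$ is a matching. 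With that one line added, the argument is complete and matches the definitions of alternating and augmenting paths used in the paper.
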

We will now use the above theorem and lemma \ref{lemma_max_h} to prove the following lemma:
\begin{lemma}
 \label{lemma_neighbour_less_l} Let $G=(U,V,E)$ be a bipartite graph with vertex sets $|U|=|V|=N$ and an edge set $E$. If $G$ contains a perfect matching and for any non-perfect matching $M\subseteq E$  there exists an augmenting path of length at most $2l+1$, then for each free vertex $u\in U$ and a partial matching $M(i)$ obtained by the algorithm in the $i$th iteration there exists at least one neighbor $j\in n_u$ such that $h_j(i)\leq l$ on each iteration of the algorithm.
 \end{lemma}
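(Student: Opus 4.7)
The plan is to obtain this lemma essentially as the contrapositive of Lemma \ref{lemma_max_h}. Fix an iteration $i$ and a free vertex $u\in U$ in the matching $M(i)$ produced by the algorithm, and argue by contradiction: assume that every neighbor $j\in n_u$ satisfies $h_j(i)\geq l+1$, i.e.\ that $n_u\subseteq D_{l+1}(i)$. The goal is to show that this assumption is incompatible with the hypothesis bounding augmenting-path lengths.

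The next step is to feed this assumption into Lemma \ref{lemma_max_h}, but with the parameter $l+1$ in place of $l$. That lemma then says: every augmenting path for $M(i)$ starting from $u$ has length at least $2(l+1)+1=2l+3$. So, in particular, no augmenting path starting from $u$ can have length $\leq 2l+1$.

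To reach the contradiction, I invoke the hypothesis of the current lemma together with Theorem \ref{theorem_augment}. Since $G$ contains a perfect matching and $M(i)$ is non-perfect (otherwise the algorithm has already terminated), an augmenting path exists, and the hypothesis supplies a length bound of $2l+1$ from any free vertex of the non-perfect matching, and in particular from the free vertex $u$ we have fixed. This directly collides with the lower bound $2l+3$ obtained from Lemma \ref{lemma_max_h}, producing the contradiction. Hence there must exist some $j\in n_u$ with $h_j(i)\leq l$, as claimed.

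The only subtlety — and the step I would flag as the main obstacle — is making sure the hypothesis is applied at the specific free vertex $u$ under consideration rather than at some unspecified free endpoint. The statement of the lemma should be read uniformly: for any non-perfect matching $M$ and any free vertex in it, the shortest augmenting path emanating from that free vertex has length at most $2l+1$. Once this uniform reading is adopted, the rest of the argument is a one-line contrapositive and requires no further combinatorial work beyond what Lemma \ref{lemma_max_h} already gives.
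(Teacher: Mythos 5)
Your reduction to Lemma \ref{lemma_max_h} is the right first move, and it is exactly how the paper begins: assume $n_u\subseteq D_{l+1}(i)$ and conclude from Lemma \ref{lemma_max_h} that every augmenting path for $M(i)$ starting at $u$ has length at least $2l+3$. The gap is in how you close the contradiction. The hypothesis of the lemma is existential: for every non-perfect matching there exists \emph{some} augmenting path of length at most $2l+1$; it says nothing about which free vertex that path starts from. You notice this yourself and propose to ``read the statement uniformly,'' i.e.\ to assume the bound holds for the shortest augmenting path out of \emph{every} free vertex. That is not a reading, it is a strictly stronger hypothesis, and it is not the one available where the lemma is later applied: the definition of $\tilde{B}(N,p)$ and Theorem \ref{lemma_prob_bnp} (Motwani) only guarantee the existence of one short augmenting path per non-maximal matching. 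Under the actual hypothesis, the short path promised at iteration $i$ may start at a different free vertex, and no contradiction arises at iteration $i$.

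The paper closes this gap with a persistence argument that your proposal omits. Since the values $h_v$ never decrease, Lemma \ref{lemma_neighbor} gives $D_m(i)\subseteq D_m(j)$ for all $j\geq i$, so the obstruction $n_u\subseteq D_{l+1}(j)$, and hence the lower bound $2l+3$ on augmenting paths from $u$, persists through all later iterations. The algorithm therefore keeps augmenting along other free vertices until it reaches a matching $M(j)$ with $|M(j)|=N-1$ in which $u$ is the only free vertex of $U$. At that point every augmenting path must start at $u$, so either no augmenting path exists at all (contradicting Theorem \ref{theorem_augment} together with the assumed perfect matching) or every augmenting path has length at least $2l+3$ (contradicting the existential hypothesis). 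You need some version of this ``push the obstruction to the last iteration'' step; without it the one-line contrapositive does not go through.
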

%
\begin{proof}

Assume towards contradiction that $G$ contains a perfect matching and for any non-perfect matching $M\subseteq E$  there exists an augmenting path of length at most $2l+1$ but on the $i$'th iteration there exists a free vertex $u$ with respect to $M(i)$ such that if $j\in N_u$ then $h_j(i)> l$. This means that $n_u\subseteq D_{l+1}(i)$. By lemma \ref{lemma_max_h} each augmenting path that begins at $u$ is of length of at least $2l+3$. However, our assumption is that for any non maximal matching $M$ of $G$ there exists at least one augmenting path of length $2l+1$ or less. This implies that $u$ is not the starting vertex of any augmenting path of $G$ on $M(i)$. This also implies that $u$ is not the starting vertex of any augmenting path of $G$ on $M(j)$ for any $j\geq i$. This is true since we know from lemma \ref{lemma_neighbor} that
\beq
D_{m}(i)\subseteq D_{m}(j),\forall m,\forall j\geq i
\eeq
 Hence, $u$ is not the starting vertex of an augmenting path of $G$ on $M(j)$ where $|M(j)|=N-1$. This implies that there is no augmenting path of length at most $2l+1$ to $M(j)$. This can occur in two situations:
  \begin {enumerate}
  \item No augmenting path exists to $G$ on $M(j)$. Theorem \ref{theorem_augment} implies that no perfect matching exists for $G$ in contradiction to the assumption that a perfect matching exists.
  \item There exists an augmenting to $G$ on $M(j)$ but with length at least $2l+3$. This is a contradiction to our assumption that every non maximal matching contains an augmenting path of length at most $2l+1$
  \end{enumerate}
\end{proof}
\begin{lemma}
\label{lemma_h_less_2l} Let $G=(U,V,E)$ be a bipartite graph with vertex sets $|U|=|V|=N$ and an edge set $E$. If $G$ contains a perfect matching and for any non-perfect matching $M\subseteq E$ there exists an augmenting path of length at most $2l+1$, then for every $v\in V$ at each iteration of the algorithm until the algorithm terminates
\beq
h_v(i)\leq l+1 ,\forall i,v\in V
\eeq
\end{lemma}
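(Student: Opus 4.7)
The plan is to prove this by a straightforward induction on the iteration count $i$, leveraging Lemma \ref{lemma_neighbour_less_l} as the key input. The intuition is that $h_v$ only ever increases, and it increases only when $v$ is selected by the algorithm as $\arg\min_{v' \in n_u} h_{v'}(i)$ for some free vertex $u$. So to bound $h_v$ at all iterations by $l+1$, it suffices to show that whenever a vertex is chosen by the algorithm, its $h$-value just before the increment is at most $l$.

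First I would set up the induction: the base case $i=0$ is trivial since $h_v(0) = 0 \leq l+1$ by initialization. For the inductive step, assume $h_v(i) \leq l+1$ for every $v \in V$ and consider what happens on iteration $i+1$. The algorithm selects some free $u \in U$ and sets $j = \arg\min_{v \in n_u} h_v(i)$, then performs $h_j(i+1) = h_j(i) + 1$, while leaving every other $h_{v}$ unchanged. For $v \neq j$ the conclusion is immediate from the inductive hypothesis.

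The remaining case is to bound $h_j(i+1)$. By hypothesis, $G$ has a perfect matching and every non-perfect matching of $G$ admits an augmenting path of length at most $2l+1$. Since the algorithm has not yet terminated, $M(i)$ is not a perfect matching, so Lemma \ref{lemma_neighbour_less_l} applies: there exists some neighbor $j^{\star} \in n_u$ with $h_{j^{\star}}(i) \leq l$. By the selection rule,
\beq
h_j(i) \;=\; \min_{v \in n_u} h_v(i) \;\leq\; h_{j^{\star}}(i) \;\leq\; l,
\eeq
so that $h_j(i+1) = h_j(i) + 1 \leq l+1$, completing the induction.

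There is no serious obstacle once Lemma \ref{lemma_neighbour_less_l} is in hand; the only small subtlety worth flagging is that the statement is phrased for all iterations \emph{until the algorithm terminates}, which is exactly the regime in which $M(i)$ is non-perfect so that Lemma \ref{lemma_neighbour_less_l} can be invoked. This alignment is what makes the induction go through cleanly.
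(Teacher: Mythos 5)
Your proposal is correct and follows essentially the same route as the paper: both arguments rest entirely on Lemma \ref{lemma_neighbour_less_l} to show that the selected vertex $j=\arg\min_{v\in n_u}h_v(i)$ always has $h_j(i)\leq l$ before the increment. The paper phrases this as ``a vertex with value $l+1$ is never the argmin, so its value never rises,'' while you package the identical observation as an explicit induction on $i$, which is a slightly cleaner write-up of the same idea.
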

\begin{proof}
On each iteration $i$, the free vertex from $U$ select a vertex $j$ such that $j=\arg\min_{v\in n_u}h_v(i)$ and set $h_j(i+1)=h_j(i)+1$. However, we assume that $G$ contains a perfect matching and for any non-perfect matching $M\subseteq E$ there exists an augmenting path of length at most $2l+1$. Hence, by lemma \ref{lemma_neighbour_less_l} to any free vertex at the $i$'th iteration with respect to a partial matching $M(i)$ there exists a vertex $v_0$ such that $h_{v_0}(i)\leq l$. This means
\beq
\min_{v\in n_u}h_v(i)\leq l,\forall i.
\eeq
As a result, if there exists vertex $v_0\in V$ where $h_{v_0}(i)=l+1$ then
\beq
\arg\min_{v\in n_u}h_v(i)\neq v_0.
\eeq
This implies that if $h_{v_0}(i)=l+1$ its value never rises and as a result
\beq
h_v(i)\leq l+1 ,\forall i,v\in V.
\eeq
\end{proof}
\begin{lemma}
 \label{lemma_worst_case}Let $G=(U,V,E)$ be a bipartite graph with vertex sets $|U|=|V|=N$ and an edge set $E$. If $G$ contains a perfect matching then
\beq
T\leq N(N-1)
\eeq
\end{lemma}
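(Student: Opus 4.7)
My plan is to bound $T$ by combining Observation \ref{lamma_T_sum_H}, Lemma \ref{lemma_h_less_2l}, and the algorithm's explicit termination condition in Table \ref{table_alg}.

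First, I would use Observation \ref{lamma_T_sum_H} to rewrite $T = \sum_{v \in V} h_v$, so the task reduces to bounding the total of the $h$-values. To control each $h_v$ via Lemma \ref{lemma_h_less_2l}, one needs a bound on the length of augmenting paths over non-perfect matchings. Since $G$ is bipartite with $|U|=|V|=N$, any simple alternating path has distinct vertices and alternates sides, so it contains at most $2N$ vertices and hence at most $2N-1$ edges. Because $G$ contains a perfect matching, Theorem \ref{theorem_augment} (Berge) guarantees an augmenting path for every non-perfect matching $M\subseteq E$, and this path satisfies $|P| \leq 2N-1 = 2(N-1)+1$. Hence the hypothesis of Lemma \ref{lemma_h_less_2l} holds with $l = N-1$, giving $h_v(i) \leq N$ for every $v \in V$ and every iteration $i$ until the algorithm terminates.

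To finish, I would combine this with the algorithm's built-in termination condition. The while-loop in Table \ref{table_alg} continues only while $\sum_{k=1}^N h_k < N(N-1)$; since each iteration increments $\sum_k h_k$ by exactly one, the sum cannot exceed $N(N-1)$ at termination, yielding $T = \sum_v h_v \leq N(N-1)$.

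The main obstacle is this last $N$-factor tightening: the structural bound from Lemma \ref{lemma_h_less_2l} alone yields $\sum h_v \leq N\cdot N = N^2$, so an $N$-factor must come from elsewhere, and the cleanest source is the safety cutoff in the loop guard. A purely structural alternative would require sharpening $h_v \leq N$ to $h_v \leq N-1$ via Lemma \ref{lemma_max_h}: if $h_{v_0} = N$ ever occurred, then at the moment of the last increment the bidder $u_0$ would satisfy $n_{u_0} \subseteq D_{N-1}$, forcing an augmenting path from $u_0$ of length at least $2N-1$; combined with Berge's theorem and the vertex-count upper bound, this augmenting path would have length exactly $2N-1$ and saturate all $2N$ vertices, and one would then need to analyze the ensuing dynamics (the displaced vertex, the unique remaining free $V$-vertex with $h=0$, and the neighbourhood constraints on $u_0$) to derive a contradiction. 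The short route via the termination condition is simpler and is the one I would take.
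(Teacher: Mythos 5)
Your argument is correct, but it diverges from the paper's proof at exactly the point you flagged as the ``main obstacle.'' The paper's own proof is purely structural: it observes that every augmenting path has length at most $2N-1$ and then invokes Lemma \ref{lemma_h_less_2l} to conclude $h_v(i)\leq N-1$, whence $T=\sum_v h_v\leq N(N-1)$ by Observation \ref{lamma_T_sum_H}. As you correctly computed, this is an off-by-one slip in the paper: with $2l+1=2N-1$ one has $l=N-1$, and Lemma \ref{lemma_h_less_2l} yields only $h_v(i)\leq l+1=N$, hence $T\leq N^2$; the bound $2N-1$ on augmenting-path length is tight (e.g., when $G$ is a Hamiltonian path on $U\cup V$ and $M$ is the near-perfect matching missing its two endpoints), so the paper's structural route does not deliver $N(N-1)$ as written. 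Your fallback to the loop guard $\sum_k h_k<N(N-1)$ in Table \ref{table_alg} is airtight, since each iteration increments the sum by exactly one; what it buys is a correct proof of the stated inequality, and it is also the only justification that works for the way the bound is actually deployed in Theorem \ref{theorem_N_iter} (where $T\leq N(N-1)$ is used for graphs \emph{not} in $\breve{B}(N,p)$, including those with no perfect matching). What it gives up is any content: the perfect-matching hypothesis becomes superfluous, and the lemma no longer certifies that the algorithm halts \emph{because} a perfect matching was found rather than because the safety cutoff fired. A fully structural repair along the lines you sketch (sharpening $h_v\leq N$ to $h_v\leq N-1$ via Lemma \ref{lemma_max_h}) would be needed to recover that stronger statement, but it is not required for the inequality as stated.
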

\begin{proof}
$G$ contains $2N$ vertices. Hence, every path is of length at most $2N-1$ and in particular every augmenting path is of length at most $2N-1$ for any non maximal matching.
By lemma \ref{lemma_h_less_2l} we get that
\beq
h_v(i)\leq N-1 ,\forall i,v\in V
\eeq
Using lemma \ref{lamma_T_sum_H} we get:
\beq
T=\sum_{v=1}^N h_v\leq N(N-1)
\eeq
\end{proof}
\begin{definition}   Let $G=(U,V,E)$ be bipartite graph with vertex sets $|U|=|V|=N$ and an edge set $E$. If $G$ is a random graph where each edge occurs with probability $p$ we say that $G\in B(N,p)$.
\end{definition}
\begin{definition}
$G\in\tilde{B}(N,p)$ if $G\in B(N,p)$ and for any non-maximal matching $M$ there exists an augmenting path for $M$ of length at most $2L+1$ where $L=\frac{\tilde {c}\log(N)}{\log(Np)}$ and $\tilde {c}>0$ is some constant.
\end{definition}
\begin{observation} \label{lemma_log_path_alg} Let $G\in \tilde{B}(N,p)$ and let and let $T$ be the number of iterations until the algorithm terminates then
\beq
T\leq N(L+1).
\eeq
where $L=\frac{\tilde {c}\log(N)}{\log(Np)}$.
\end{observation}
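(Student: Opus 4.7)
The plan is to observe that this claim is an almost immediate consequence of the two preceding results, Lemma \ref{lemma_h_less_2l} and Observation \ref{lamma_T_sum_H}, specialized to the value $l=L$ supplied by the definition of $\tilde{B}(N,p)$.

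First I would unpack the hypothesis $G\in\tilde{B}(N,p)$: by definition, for every non-maximal matching $M\subseteq E$ there is an augmenting path of length at most $2L+1$. A small preliminary check is that this forces $G$ to contain a perfect matching (which is needed to invoke Lemma \ref{lemma_h_less_2l}): starting from $M=\emptyset$ and repeatedly applying the augmenting-path hypothesis together with Berge's theorem \ref{theorem_augment} in reverse, we cannot get stuck at a non-perfect matching, so a perfect matching must exist. With that in hand, the hypothesis of Lemma \ref{lemma_h_less_2l} is satisfied with $l=L$.

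Next I would apply Lemma \ref{lemma_h_less_2l} directly to conclude that for every vertex $v\in V$ and every iteration $i$ of the algorithm,
\beq
h_v(i)\leq L+1.
\eeq
Finally, invoking Observation \ref{lamma_T_sum_H}, which identifies the total number of iterations with the sum of the final values $h_v$, I would sum over the $N$ vertices to obtain
\beq
T \;=\; \sum_{v=1}^{N} h_v \;\leq\; N(L+1),
\eeq
which is the desired bound.

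There is no real obstacle in this step: the work has already been done in Lemmas \ref{lemma_max_h}--\ref{lemma_h_less_2l}, which transfer the combinatorial bound on augmenting path length into a pointwise bound on the prices $h_v$. The only thing to be careful about is the minor wording gap between ``non-maximal'' in the definition of $\tilde{B}(N,p)$ and ``non-perfect'' in the statement of Lemma \ref{lemma_h_less_2l}; as noted above, under this hypothesis the two notions coincide, so the lemma applies verbatim.
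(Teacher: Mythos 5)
Your proof follows exactly the same route as the paper's: apply Lemma \ref{lemma_h_less_2l} with $l=L$ to get $h_v(i)\leq L+1$ for all $v$ and $i$, then sum over the $N$ vertices using Observation \ref{lamma_T_sum_H}. Your preliminary check that the augmenting-path hypothesis already forces a perfect matching (so that Lemma \ref{lemma_h_less_2l} applies) is a small point of care the paper skips over silently, but it does not change the argument.
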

\begin{proof}
$G\in \tilde{B}(N,p)$; hence, for any non maximal matching $M$ there exists an augmenting path of length at most $2L+1$.
let $L=\frac{\tilde {c}\log(N)}{\log(Np)}$, then
by lemma \ref{lemma_h_less_2l} we get that
\beq
h_v(i)\leq L+1 ,\forall i , v\in V.
\eeq
Using lemma \ref{lamma_T_sum_H} we get:
\beq
T=\sum_{v=1}^N h_v\leq N(L+1)=\frac{N\tilde {c}\log(N)}{\log(Np)}+N.
\eeq
\end{proof}
The following theorem was proven in \cite{motwani1994average}:
\begin{theorem}\label{lemma_prob_bnp} Let $G\in B(N,p)$ where $p\geq\frac{(1+\epsilon)\log(N)}{N}$ then for every $\gamma>0$ there exists $N_{\gamma}$ such that for every $N\geq N_{\gamma}$
\beq
\Pr(G\in \tilde{B}(N,p))\geq 1-N^{-\gamma}.
\eeq
\end{theorem}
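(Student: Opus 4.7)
The plan is to decouple the probabilistic content from the combinatorial content. First I would show that $G\in B(N,p)$ with $p\geq (1+\epsilon)\log(N)/N$ satisfies a strong bipartite expansion property with probability at least $1-N^{-\gamma}$. Then, conditioning on expansion, I would show purely deterministically that for every non-perfect matching $M$ there is an augmenting path of length at most $2L+1$ with $L=\tilde c\log(N)/\log(Np)$, which is the statement $G\in \tilde B(N,p)$.

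The expansion lemma I aim for is: with probability at least $1-N^{-\gamma}$, every $S\subseteq U$ with $1\leq |S|\leq N/2$ satisfies $|N_G(S)\cap V|\geq \min(\alpha Np\,|S|,\beta N)$ for fixed constants $\alpha,\beta\in (0,1)$, and symmetrically with $U,V$ swapped. The proof is a standard Chernoff plus union bound: for fixed $S\subseteq U$ and $T\subseteq V$ with $|T|=N-\min(\alpha Np|S|,\beta N)$, the probability that there is no edge between $S$ and $T$ is at most $(1-p)^{|S||T|}\leq\exp(-p|S||T|)$. Multiplying by $\binom{N}{|S|}\binom{N}{|T|}\leq (eN/|S|)^{|S|}(eN/|T|)^{|T|}$ and summing over $|S|$ in the two regimes (where either the first or the second term in the $\min$ dominates) yields a bound of $N^{-\gamma}$ once the hypothesis $p\geq (1+\epsilon)\log(N)/N$ and a sufficiently large $\tilde c$ are combined.

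Conditional on the expansion event, fix any non-perfect matching $M$ and a free vertex $u_0\in U$, which exists since $|M|<N$. Build alternating BFS layers by setting $R^U_0=\{u_0\}$, $R^V_k=N_G(R^U_{k-1})$, and $R^U_k=R^U_{k-1}\cup \{u\in U:(u,v)\in M,\ v\in R^V_k\}$. If any $R^V_k$ contains a free vertex of $V$, we are done with an augmenting path of length $\leq 2k-1$. Otherwise every $v\in R^V_k$ is matched by $M$, so $|R^U_k|=1+|R^V_k|$, and expansion gives $|R^V_{k+1}|\geq \min(\alpha Np|R^U_k|,\beta N)$. Induction on $k$ yields $|R^V_k|\geq (\alpha Np)^k/2$ while the first branch of the $\min$ is active, so after $k^\ast=O(\log(N)/\log(Np))$ rounds we reach $|R^V_{k^\ast}|\geq \beta N$. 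Two further applications of the expansion lemma to the complements, combined with the minimum-degree bound $\deg(v)\geq Np/2$ for every $v\in V$ (which the same Chernoff bound gives in passing), force $R^V_{k^\ast+c}=V$ for a fixed constant $c$, so in particular every free vertex of $V$ is reached. Absorbing the additive $c$ into $\tilde c$ gives an augmenting path of length $\leq 2L+1$.

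The main obstacle is the endgame of the BFS argument: reaching merely a large $R^V_k$ is not enough, because any single free $V$-vertex might be the one missed. I expect to resolve this by sharpening the expansion lemma at the top end — once $|R^U_k|$ exceeds $N/(Np)$, the minimum-degree estimate on $V$ guarantees that $N_G(R^U_k)=V$, which is exactly the deterministic ingredient needed to sweep in every free vertex. A secondary technical nuisance is that at the threshold density $p\sim \log(N)/N$ one cannot be sloppy with $(1-p)^{|S||T|}\leq e^{-p|S||T|}$ at small $|S|$; the binomial coefficients and the $(1+\epsilon)$ slack in $p$ must be balanced carefully so that the union bound closes uniformly in $|S|$ against the target $N^{-\gamma}$.
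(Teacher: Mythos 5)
The paper offers no proof of this statement: it is quoted from Motwani \cite{motwani1994average} (see also Bast et al.\ \cite{bast2006matching}, who revisit and repair parts of that analysis for sparse graphs), so there is no internal argument to compare against. Your sketch follows the same general strategy as the cited source --- a high-probability expansion property followed by a deterministic alternating-BFS layering --- but two of your concrete steps fail, and both failures are precisely about achieving error probability $N^{-\gamma}$ for \emph{every} $\gamma>0$ at the threshold density $p=(1+\epsilon)\log(N)/N$.

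First, the bottom of your expansion lemma is false as stated. For $|S|=1$ it asserts $\deg(u)\geq \alpha Np$ for every vertex, and you separately invoke $\deg(v)\geq Np/2$ ``in passing.'' With $Np=(1+\epsilon)\log(N)$ the lower-tail exponent of $\Pr\left[\mathrm{Bin}(N,p)\leq\alpha Np\right]$ is $(1+\epsilon)\left(1-\alpha+\alpha\ln\alpha\right)\log(N)$, and since $1-\alpha+\alpha\ln\alpha<1$ for $\alpha>0$, the expected number of violating vertices is $N^{1-(1+\epsilon)(1-\alpha+\alpha\ln\alpha)+o(1)}$; this is not $O(N^{-\gamma})$ once $\gamma\geq\epsilon$, and for $\alpha=1/2$ it actually diverges. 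So low-degree vertices genuinely exist, the union bound cannot close uniformly down to singletons, and the first BFS levels need separate treatment (restricting the expansion claim to $|S|\geq s_0$, or degrading the factor for small sets). Second, and more seriously, the endgame is wrong: $|R^U_k|\geq N/(Np)=1/p$ does \emph{not} give $N_G(R^U_k)=V$, since a fixed $v$ has no neighbour in a fixed set of size $1/p$ with probability $(1-p)^{1/p}\to e^{-1}$. Dominating all of $V$ with failure probability $N^{-\gamma}$ would require $|R^U_k|\geq(1+\gamma)\log(N)/p=(1+\gamma)N/(1+\epsilon)$, which exceeds $N$ when $\gamma>\epsilon$; no minimum-degree argument can ``sweep in'' the one free $V$-vertex you might be missing. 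The standard repair is bidirectional: grow a forward alternating tree from the free $U$-vertex and a backward alternating tree from a free $V$-vertex until both have size $\Theta(N)$, then use the fact that two disjoint linear-size sets fail to be joined by an edge only with probability $(1-p)^{\Omega(N^2)}=e^{-\Omega(N\log N)}$, which does beat $N^{-\gamma}$ for every $\gamma$. (You also need the starting free vertex to actually lie on some augmenting path, which requires either the perfect-matching hypothesis or a symmetric-difference argument with a maximum matching.) As written, the proposal does not establish the theorem.
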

The following theorem was proven in \cite{erdHos1966existence}:
\begin{theorem} \label {theorem_prob_perfect}Let $G=(U,V,E)\in B(N,p)$ and $p=\frac{c\log(N)}{N}, c>2$ then
\beq
\lim_{N\to\infty}\Pr\left(G \textrm{ contains a perfect matching}\right)=e^{-2N^{1-c}}.
\eeq
\end{theorem}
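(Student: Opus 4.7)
The strategy is to reduce the existence of a perfect matching to the absence of an ``obvious'' obstruction (an isolated vertex), and then to compute the asymptotic probability of that obstruction via a Poisson-style approximation. Throughout, I use Hall's theorem: $G$ has a perfect matching iff every $S\subseteq U$ satisfies $|N(S)|\geq|S|$ and every $T\subseteq V$ satisfies $|N(T)|\geq|T|$. Call $S$ a \emph{blocking set} if $|N(S)|<|S|$. A blocking set of size $1$ is precisely an isolated vertex on one side of $G$.

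First I would handle the isolated vertices. Let $X$ be the number of isolated vertices in $U\cup V$. Since each vertex has independent neighbor set of size $\mathrm{Bin}(N,p)$,
\beq
E[X]=2N(1-p)^N=2N\,e^{-c\log N\,(1+o(1))}=2N^{1-c}\,(1+o(1))=:\lambda_N.
\eeq
I would then apply the method of factorial moments (or equivalently a Stein--Chen coupling) to establish that $X$ is asymptotically Poisson with parameter $\lambda_N$. The main computation is showing that for each fixed $k\geq 1$,
\beq
E\bigl[X(X-1)\cdots(X-k+1)\bigr]=\lambda_N^{k}\,(1+o(1)),
\eeq
which reduces to checking that pairs (triples, $\ldots$) of vertices are isolated nearly independently, with the coupling error arising only from shared edges being negligible for $c>2$. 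From this one gets $\Pr(X=0)=e^{-\lambda_N}(1+o(1))=e^{-2N^{1-c}}(1+o(1))$.

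Next, and this is the main obstacle, I would bound the probability of the event $\cB_{\geq 2}$ that some blocking set of size at least $2$ exists. By symmetry, consider blocking sets $S\subseteq U$; there must then exist $T\subseteq V$ with $|T|=|S|-1$ such that no edge of $G$ runs from $S$ to $V\setminus T$. Hence
\beq
\Pr(\cB_{\geq 2})\leq 2\sum_{k=2}^{N/2}\binom{N}{k}\binom{N}{k-1}(1-p)^{k(N-k+1)}.
\eeq
Substituting $p=c\log N/N$, the $k=2$ term is $O(N^{3-2c})$, which for $c>2$ is $o(N^{1-c})=o(\lambda_N)$. The main work is verifying that the full sum is dominated by its small-$k$ tail; the standard device is to split the sum at, say, $k=N/\log N$, bound each summand by $(1-p)^{k(N-k+1)}\leq e^{-pk(N-k+1)}$, and use $\binom{N}{k}\leq (eN/k)^k$ to show the sum is $O(N^{3-2c}+e^{-\Omega(N)})$. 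The delicate point is that for intermediate $k$ the binomial factors nearly cancel the $(1-p)^{k(N-k+1)}$ factor, so one needs that $c>2$ (not merely $c>1$) to keep the exponent strictly negative uniformly.

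Finally, I would combine the pieces. Because $\{G\text{ has no perfect matching}\}\subseteq\{X\geq 1\}\cup\cB_{\geq 2}$, and conversely $\{X\geq 1\}\subseteq\{G\text{ has no perfect matching}\}$,
\beq
\Pr(X=0)-\Pr(\cB_{\geq 2})\leq\Pr\bigl(G\text{ has a perfect matching}\bigr)\leq\Pr(X=0).
\eeq
Using $\Pr(X=0)=e^{-2N^{1-c}}(1+o(1))$ from the Poisson step and $\Pr(\cB_{\geq 2})=o(N^{1-c})=o(1-e^{-2N^{1-c}})$ from the counting step, both bounds coincide asymptotically and yield the claimed limit. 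The only genuinely non-routine input is the tail estimate on blocking sets in the middle paragraph; everything else is bookkeeping around the Poisson approximation.
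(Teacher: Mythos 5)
The paper does not actually prove this statement: it imports it from Erd\H{o}s--R\'enyi \cite{erdHos1966existence}, so there is no in-paper argument to compare yours against, and what you have written is essentially a reconstruction of the standard proof of the cited result. Your outline is sound, and the two computations you defer do go through: in the factorial-moment step the only dependence between ``$u$ isolated'' and ``$v$ isolated'' for $u\in U$, $v\in V$ is the single shared potential edge, which contributes a harmless factor $(1-p)^{-1}=1+o(1)$; and in the union bound the per-$k$ exponent $2k\log(eN/k)-pk(N-k+1)$ is uniformly negative over $2\leq k\leq N/2$ once $c>2$ (splitting at $k=\sqrt{N}$ works just as well as $k=N/\log N$), so the sum is dominated by its $k=2$ term $O(N^{3-2c})$, and the comparison $3-2c<1-c$ is precisely where the hypothesis $c>2$ is consumed, as you observe. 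Two points you leave implicit deserve to be made explicit if this were written out. First, the restriction to $k\leq N/2$ requires the standard remark that a minimal Hall violator $S$ on one side with $|N(S)|=|S|-1$ yields the violator $V\setminus N(S)$ of size $N-|S|+1$ on the other side, so one may always pass to whichever of the two is smaller; your leading factor of $2$ is silently doing this work. Second, the theorem as literally stated is ill-posed (a limit in $N$ cannot equal an expression that still depends on $N$); your reading of it as the asymptotic equivalence $\Pr(G\textrm{ contains a perfect matching})=e^{-2N^{1-c}}(1+o(1))$, with every error term controlled at scale $o(N^{1-c})$, is the correct interpretation and is exactly what your decomposition delivers.
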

We now prove the main theorem of the paper:
\begin{theorem}\label{theorem_N_iter} Let $G\in B(N,p)$ be a random bipartite graph with $N>N_{\gamma}$ vertices on each side and $p=\frac{c\log(N)}{N},c>2$. Then the expected number of iterations until the algorithm terminates is:
\beq
E(T)\leq O\left(\frac{N\log(N)}{\log(Np)}\right)
\eeq
\end{theorem}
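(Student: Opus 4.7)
The plan is to decompose $E(T)$ by conditioning on the favorable event $\{G \in \tilde{B}(N,p)\}$. On this event, Observation~\ref{lemma_log_path_alg} hands the target bound over directly; off it, a crude worst-case bound $T \leq N(N-1)$ suffices provided the failure probability shrinks faster than $1/N^{2}$, which is exactly what Theorem~\ref{lemma_prob_bnp} delivers.

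First I would verify that Theorem~\ref{lemma_prob_bnp} applies: since $p = c\log(N)/N$ with $c > 2$, we have $p \geq (1+\varepsilon)\log(N)/N$ for any $\varepsilon \in (0, c-1)$, and hence for every $\gamma > 0$ and all $N \geq N_{\gamma}$, $\Pr(G \in \tilde{B}(N,p)) \geq 1 - N^{-\gamma}$. On the event $\{G \in \tilde{B}(N,p)\}$, Observation~\ref{lemma_log_path_alg} yields
\beq
T \leq \frac{N\tilde{c}\log(N)}{\log(Np)} + N.
\eeq
On the complementary event, the loop guard ``$\sum_{k=1}^{N} h_k < N(N-1)$'' in the algorithm, combined with Observation~\ref{lamma_T_sum_H}, caps $T$ at $N(N-1)$ unconditionally; crucially, no structural hypothesis on $G$ is required for this bound.

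Combining the two cases via total expectation gives
\beq
E(T) \leq \frac{N\tilde{c}\log(N)}{\log(Np)} + N + N(N-1)\,N^{-\gamma}.
\eeq
Choosing any fixed $\gamma > 2$ (for instance $\gamma = 3$) drives the last term to $o(1)$, which is absorbed into the leading term, producing $E(T) = O\bigl(N\log(N)/\log(Np)\bigr)$, as claimed.

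The only subtlety I anticipate is the worst-case bound used on the bad event: Lemma~\ref{lemma_worst_case} formally assumes $G$ contains a perfect matching, whereas on $\{G \notin \tilde{B}(N,p)\}$ this is not given. Fortunately, the $O(N^{2})$ bound follows just from the algorithm's own stopping criterion, so no appeal to a structural property of $G$ is needed on the bad event. Consequently, Theorem~\ref{theorem_prob_perfect} is not actually used for the expected-iteration count; it only bears on the probability that the returned matching is itself perfect. Beyond clarifying this point, the remainder is a routine total-expectation split.
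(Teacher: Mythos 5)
Your decomposition is the same one the paper uses, and your observation that the $O(N^2)$ cap on the bad event follows from the loop guard $\sum_k h_k < N(N-1)$ rather than from Lemma~\ref{lemma_worst_case} is a fair clarification (the paper needs this implicitly too, since $G\notin\breve{B}(N,p)$ includes graphs with no perfect matching). However, there is a genuine gap on the \emph{good} event: you claim Theorem~\ref{theorem_prob_perfect} is not needed and condition only on $\{G\in\tilde{B}(N,p)\}$. The bound $T\leq N(L+1)$ of Observation~\ref{lemma_log_path_alg} rests on Lemma~\ref{lemma_h_less_2l}, which rests on Lemma~\ref{lemma_neighbour_less_l}, and both carry the hypothesis that $G$ contains a perfect matching; the contradiction in Lemma~\ref{lemma_neighbour_less_l} is obtained by invoking Berge's theorem (Theorem~\ref{theorem_augment}) to produce an augmenting path starting at the free vertex $u$, which is only guaranteed when a perfect matching exists. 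Membership in $\tilde{B}(N,p)$ only promises that augmenting paths are short \emph{when they exist}; if $G$ has no perfect matching, a free vertex may lie on no augmenting path at all, its neighbours' prices are then not capped at $L+1$, and the algorithm can run for $\Theta(N^2)$ iterations before the loop guard fires.

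This is precisely why the paper intersects $\tilde{B}(N,p)$ with the perfect-matching event to form $\breve{B}(N,p)$ and invokes Theorem~\ref{theorem_prob_perfect} to get $\Pr(A)\leq 2/N^{c-1}\leq 2/N$. The resulting extra contribution to the expectation, $\frac{2}{N}\cdot O(N^2)=O(N)$, is still absorbed by the main term because $\log(N)/\log(Np)\to\infty$, so your final bound is unaffected --- but the perfect-matching probability cannot be dropped from the argument, and your $\gamma=3$ refinement only controls the $\tilde{B}(N,p)$ failure, not this second failure mode.
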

\begin{proof}
Let
\beq
A=\{G\in B(N,p): G \textrm{ does not contains a perfect matching}\}
\eeq
and let
\beq
B=\{G\in B(N,p)\setminus \tilde{B}(N,p)\}.
\eeq
From Theorem \ref{lemma_prob_bnp} we know that there exists $N_{\gamma}$ such that for every $N>N_{\gamma}$
\beq
\Pr\left(B\right)\leq N^{-\gamma}.
\eeq
From theorem \ref{theorem_prob_perfect} we know that if $\left(G\in B(N,p)\right)$, $p>\frac{c\log(N)}{N}$ and $c>2$, there exists $N_1$ such that for every $N>N_1$
\beq
\Pr\left(A\right)\leq \frac{2}{N^{c-1}}\leq \frac{2}{N}.
\eeq
let $\gamma\geq 1$ and let $\tilde{N}$ be
\beq
\tilde{N}=\max(N_{\gamma},N_1)
\eeq
let $\breve{B}(N,p)$  be
\beq
\displaystyle \breve{B}(N,p)=\left\{G:G \textrm{ contains a perfect matching}\right\}\cap\tilde{B}(N,p).
\eeq
Using the union bound,
\beq
\displaystyle \Pr\left(G\in \breve{B}(N,p)\right)=1-\Pr\left(A\cup B\right)\geq 1-\Pr(A)-\Pr(B)
\eeq
If $N\geq\tilde{N}$, $p=\frac{c\log(N)}{n},c>2$ then
\beq
\Pr\left(G\in \breve{B}(N,p)\right)\geq 1-\frac{3}{N}.
\eeq

From lemma \ref{lemma_worst_case} and Lemma \ref{lemma_log_path_alg} we know that:
\beq
T\leq \left\{\begin{matrix} N(N-1),
& G\not\in \breve{B}(N,p)  \\
\frac{N\tilde {c}\log(N)}{\log(Np)}+N,
 & G\in \breve{B}(N,p)
\end{matrix}\right. ,
\eeq

this implies that if $N\geq\tilde{N}$, $p=\frac{c\log(N)}{n},c>2$ then
\beq
\bea{l}
\displaystyle E\left(T\right)\leq \left(1-\frac{3}{N}\right)\left(\frac{N\tilde {c}\log(N)}{\log(Np)}+N\right)+\frac{3}{N}(N^2-N)=\\
\displaystyle = O\left(\frac{N\log(N)}{\log(Np)}\right).
\ena
\eeq
If $p=\frac{c\log(N)}{N}$ the number of iterations is
\beq
O\left(\frac{N\log(N)}{\log(\log(N))}\right).
\eeq
\end{proof}

\section{Expected time complexity}
In this section we use the results from the previous section to analyze the expected time complexity of the algorithm in sequential and parallel implementations. We first prove that the expected running time of the algorithm is $O\left(\frac{N\log^2(N)}{\log(Np)}\right)$ on sequential machines. We point out that if the graph is dense it is advantageous to take a random sparse subgraph of the original graph and perform the algorithm on the sparse subgraph. Then, we introduce a parallel implementation of the algorithm for machines with $Q$ processors and a shared memory. We prove that if $Q=O(\log(N))$ the expected time complexity is $O\left(N\log(N)\right)$.
\subsection{Sequential implementation}\label {subsection_seq}
We now analyze the expected time complexity of the algorithm for random bipartite graphs from $B(N,p)$ on sequential machines. From Theorem \ref{theorem_N_iter} we know that the expected number of iterations is bounded by $O\left(\frac{N\log(N)}{\log(Np)}\right)$. Hence we need to show that the expected numbed of operations per iteration is $O\left(\log(N)\right)$. The next theorem shows that the algorithm can be implemented with  $O\left(\frac{N\log^2(N)}{\log(Np)}\right)$ time complexity on sequential machines.

\begin{theorem}\label{theorem_seq} Let $G=(U,V,E)\in B(N,p)$ be a random bipartite graph with $p\geq\frac{c\log(N)}{N}$,$c>2$ then the algorithm finds a maximal matching on a sequential machine with one processor with the expected time bounded by $O\left(\frac{N\log^2(N)}{\log(Np)}\right)$
\end{theorem}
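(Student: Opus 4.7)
The strategy is to multiply the iteration-count bound $E(T)=O(N\log N/\log(Np))$ from Theorem \ref{theorem_N_iter} by an $O(\log N)$ bound on the expected cost of a single iteration. Looking at the pseudocode in Table \ref{table_alg}, every iteration performs only $O(1)$ bookkeeping (popping a free vertex from a stack, following one matching pointer to find $u_{\mathrm{old}}$, incrementing a counter) together with a search for $j=\arg\min_{v\in n_u}h_v$ over the neighbors of the chosen free vertex $u$. Thus the entire task reduces to showing that this $\arg\min$ step can be carried out in expected $O(\log N)$ time.

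I would implement the $\arg\min$ by a straight linear scan of $n_u$, costing $O(|n_u|)$. When $p=\Theta(\log N/N)$, each degree $|n_u|$ is $\mathrm{Binomial}(N,p)$ with mean $Np=\Theta(\log N)$; a Chernoff bound on a single degree together with a union bound over the $N$ vertices of $U$ yields $\max_u|n_u|=O(\log N)$ with probability $1-o(1/N)$. Since the degree of $u$ depends only on the underlying graph and not on the algorithm's history, this bound holds uniformly throughout the run, so on the good event every scan takes $O(\log N)$ deterministic time. For denser $p$, I would preprocess by keeping each edge of $G$ independently with probability $q=c\log N/(Np)$; the retained subgraph is distributed as $B(N,c\log N/N)$ and still contains a perfect matching with probability $1-O(N^{1-c})$ by Theorem \ref{theorem_prob_perfect}, so the sparse-case scan analysis applies verbatim to the sparsified graph.

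Putting the pieces together, $E(\text{runtime})\leq E(T)\cdot O(\log N)=O(N\log^2 N/\log(Np))$, which is the claim. The main obstacle is the two low-probability bad events left over by this analysis: the event that some degree of $G$ exceeds $C\log N$, and the event that the (possibly sparsified) graph has no perfect matching at all. I would handle both uniformly: on either event, Lemma \ref{lemma_worst_case} caps the iteration count at $N(N-1)$ and each linear scan is trivially bounded by $N$, so the worst-case per-run cost is $O(N^3)$. Choosing the constants large enough (via Theorems \ref{lemma_prob_bnp} and \ref{theorem_prob_perfect}) makes each bad event have probability $O(N^{-\gamma})$ with $\gamma>3$, so its contribution to the expected runtime is $O(N^{3-\gamma})=o(1)$ and is easily absorbed into the main bound, yielding the claimed $O(N\log^2 N/\log(Np))$ expected time.
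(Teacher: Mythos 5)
Your decomposition is the same as the paper's: the expected iteration count from Theorem \ref{theorem_N_iter} multiplied by the cost of the neighbor scan $j=\arg\min_{v\in n_u}h_v$, with sparsification for dense $p$ and the $N(N-1)$ cap of Lemma \ref{lemma_worst_case} reserved for low-probability bad events. In one respect you are more careful than the paper, which simply writes $E(T_{total})=E(T)E(T_{inner})$ as a product of expectations without addressing the dependence between the number of iterations and the per-iteration cost; your uniform high-probability degree bound sidesteps that issue cleanly.

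The one step that does not go through as written is the final bad-event accounting. You claim that choosing constants large enough makes \emph{each} bad event have probability $O(N^{-\gamma})$ with $\gamma>3$, so that the $O(N^3)$ worst-case run contributes $o(1)$. That is true for the degree event (Chernoff gives any polynomial rate) and for the event $G\notin\tilde{B}(N,p)$ (Theorem \ref{lemma_prob_bnp} lets you pick $\gamma$ freely), but not for the no-perfect-matching event: Theorem \ref{theorem_prob_perfect} only gives failure probability of order $N^{1-c}$, and $c>2$ is fixed by the hypothesis (and by your sparsification rate $q=c\log N/(Np)$), so this probability may be only $N^{-1-\epsilon}$; its product with the crude $O(N^3)$ cap is $O(N^{2-\epsilon})$, which swamps the claimed $O(N\log^2 N/\log(Np))$. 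The repair is to decouple the events: first condition on the uniform degree bound $\max_u|n_u|=O(\log N)$, whose failure probability can be driven below $N^{-\gamma}$ for any $\gamma$, so the $O(N^3)$ cap is affordable there; on the remaining bad events the degrees are still $O(\log N)$, so each of the at most $N^2+N$ iterations costs $O(\log N)$ and the contribution is $O(N^{1-c})\cdot O(N^2\log N)=O(N^{3-c}\log N)=o(N)$, which is absorbed. With that correction your argument is complete and matches the paper's intent --- indeed the paper's own dense-case computation, $\frac{2}{N}O(N^2\log N)$, implicitly performs exactly this decoupling.
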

\begin{proof}

In section \ref{section_alg} we proved that the expected number of iterations until convergence is bounded by
\beq
E\left(T\right)\leq O\left(\frac{N\log(N)}{\log(Np)}\right)
\eeq
where $T$ is the number of iterations until convergence.
Define $T_{inner}$ to be the number of operations needed in each iteration and $T_{total}$ to be the total number of operations preformed by the algorithm.

All of the operations in each iterations are $O(1)$ except for the operation of finding a vertex with minimal value for the chosen vertex $u$ which requires $O(|n_u|)$ operations where $|n_u|$ is the number of neighbours of vertex $u$. Since the number of neighbors of each vertex is an independent random variable and $E(|n_u|)=Np$  then
\beq
E\left(T_{total}\right)=E\left(T\right)E\left(T_{inner}\right)=O\left(\frac{pN^2\log(N)}{\log(Np)}\right)
\eeq
If $E(|n_u|)=O\left(\log(N)\right)$ then
\beq
E\left(T_{total}\right)=O\left(\frac{N\log^2(N)}{\log\log(N)}\right)
\eeq
\end{proof}
When the graph is dense; i.e., $E(|n_u|)=O(N)$ the algorithm converges with a time complexity of $O\left(N^2\log(N)\right)$ which is not particularly good. To improve the expected running time performance, we can obtain a sparse random graph from the dense graph by randomly choosing $d$ edges from the original dense graph where $d$ is binomially distributed as
\beq
d\sim Bin(|E|,\frac{c\log(N)}{Np})
\eeq
Let $\tilde{G}$ be the sparse graph obtained from $G$ by randomly selecting $d$ edges of $G$. If $\tilde{G}$ contains a perfect matching, a solution for the MCM problem for $\tilde{G}$ is also a solution for $G$. If $\tilde{G}$ doest not contain a perfect matching the algorithm is applied on $G$ with $O\left(N^2\right)$ iterations. Hence, the expected time complexity even for dense graphs remains:
\beq
\bea{l}
\displaystyle E\left(T_{total}\right)\leq\left(1-\frac{2}{N}\right)O\left(\frac{N\log^2(N)}{\log\log(N)}\right)+\frac{2}{N}O\left(N^2\log(N)\right)=\\
\displaystyle =O\left(\frac{N\log^2(N)}{\log\log(N)}\right)
\ena
\eeq

\subsection{Parallel Implementation}
We now analyze a parallel implementation of the algorithm on a machine with $Q$ processors and a shared memory. We show that if $Q=O\left(\log(N)\right)$ the expected time complexity of the algorithm is $O\left(N\log(N)\right)$. As in the sequential case, the expected number of outer iterations is $T=O\left(\frac{N\log(N)}{\log(Np)}\right)$. This implies that for $p=\frac{c\log(N)}{N}$ $T=O\left(\frac{N\log(N)}{\log(\log(N))}\right)$. In the parallel implementation we keep a sorted tree for each vertex of $u\in U$. If $G$ is a random graph with $p=\frac{c\log(N)}{N}$ then
\beq
\bea{ll}
E(|n_u|)=c\log(N), & \forall u\in U\\
E(|n_v|)=c\log(N), & \forall v\in V\\
\ena
\eeq
This implies that on each iteration we need to maintain $O\left(\log(N)\right)$ sorted trees with an expected number of $O\left(\log(N)\right)$ elements per tree. The maintenance of each sorted tree can be done in parallel over $Q$ processors with $O\left(\log\left(\log(N)\right)\right)$ operations per processor. Hence, the expected time complexity for the parallel implementation is given by:
\beq
E(T_{total})=E(T)E(T_{inner})=O\left(\frac{N\log^2(N)}{Q}\right)
\eeq
and if $Q=O(\log(N))$ the expected time becomes
\beq
E(T_{total})=O\left(N\log(N)\right)
\eeq

Note that if $G$ is not sparse we can obtain a sparse graph from the dense graph and use the same arguments as in the sequential implementation to get
\beq
E(T_{total})=O\left(N\log(N)\right)
\eeq

\section {Conclusion}
In this paper we analyzed the expected time complexity of the auction algorithm for the matching problem on random bipartite graphs. We proved that the expected time complexity of the auction algorithm for bipartite matching is just as good as other augmenting path algorithms such  as the HK algorithm. Furthermore, we showed that the algorithm can be implemented on parallel machines with $O(\log(N))$ processors and a shared memory with an expected time complexity of $O(N\log(N))$.


\bibliographystyle{amsalpha}
\newcommand{\etalchar}[1]{$^{#1}$}
\providecommand{\bysame}{\leavevmode\hbox to3em{\hrulefill}\thinspace}
\providecommand{\MR}{\relax\ifhmode\unskip\space\fi MR }
\providecommand{\MRhref}[2]{%
  \href{http://www.ams.org/mathscinet-getitem?mr=#1}{#2}
}
\providecommand{\href}[2]{#2}

\end{document}